%% ****** Start of file apstemplate.tex ****** %
%%
%%
%%   This file is part of the APS files in the REVTeX 4 distribution.
%%   Version 4.1r of REVTeX, August 2010
%%
%%
%%   Copyright (c) 2001, 2009, 2010 The American Physical Society.
%%
%%   See the REVTeX 4 README file for restrictions and more information.
%%
%
% This is a template for producing manuscripts for use with REVTEX 4.0
% Copy this file to another name and then work on that file.
% That way, you always have this original template file to use.
%
% Group addresses by affiliation; use superscriptaddress for long
% author lists, or if there are many overlapping affiliations.
% For Phys. Rev. appearance, change preprint to twocolumn.
% Choose pra, prb, prc, prd, pre, prl, prstab, prstper, or rmp for journal
%  Add 'draft' option to mark overfull boxes with black boxes
%  Add 'showpacs' option to make PACS codes appear
%  Add 'showkeys' option to make keywords appear
\documentclass[aps,pra,twocolumn,amsfonts, amssymb, amsmath, amsthm,groupedaddress]{revtex4-1}
\usepackage{url}
\usepackage{color}
\usepackage{graphicx}
\usepackage{amsthm}
\newtheorem{definition}{Definition}
\newtheorem{proposition}{Proposition}
\newtheorem*{remark}{Remark}
\DeclareMathOperator{\Tr}{\operatorname{Tr}}

\usepackage{relsize}
%\documentclass[aps,prl,preprint,superscriptaddress]{revtex4-1}
%\documentclass[aps,prl,reprint,groupedaddress]{revtex4-1}

% You should use BibTeX and apsrev.bst for references
% Choosing a journal automatically selects the correct APS
% BibTeX style file (bst file), so only uncomment the line
% below if necessary.
%\bibliographystyle{apsrev4-1}

\begin{document}
% Use the \preprint command to place your local institutional report
% number in the upper righthand corner of the title page in preprint mode.
% Multiple \preprint commands are allowed.
% Use the 'preprintnumbers' class option to override journal defaults
% to display numbers if necessary
%\preprint{}

\title{Qubit positive-operator-valued measurements by destructive weak measurements}

\author{Yi-Hsiang Chen}
\affiliation{Department of Physics and Astronomy, University of Southern California, Los Angeles, California 90089, USA}
\author{Todd A. Brun}
\affiliation{Communication Sciences Institute, University of Southern California, Los Angeles, California 90089, USA}

\date{\today}

\begin{abstract}
Many quantum measurements, such as photodetection, can be destructive. In photodetection, when the detector ``clicks'' a photon has been absorbed and destroyed. Yet the lack of a click also gives information about the presence or absence of a photon. In monitoring the emission of photons from a source, one decomposes the strong measurement into a series of weak measurements, which describe the evolution of the state during the measurement process. Motivated by this example of destructive photon detection, a simple model of destructive weak measurements using qubits was studied in earlier work by the authors. It has shown that the model can achieve any positive-operator-valued measurement (POVM) with commuting POVM elements including projective measurements. In this paper, we use a different approach for decomposing \textit{any} POVM into a series of weak measurements. The process involves three steps: randomly choose, with certain probabilities, a set of linearly independent POVM elements; perform that POVM by a series of destructive weak measurements; output the result of the series of weak measurements with certain probabilities. The probabilities of the outcomes from this process agree with those from the original POVM, and hence this model of destructive weak measurements can perform any qubit POVM. 
\end{abstract}

% insert suggested PACS numbers in braces on next line
\pacs{}
% insert suggested keywords - APS authors don't need to do this
%\keywords{}

%\maketitle must follow title, authors, abstract, \pacs, and \keywords
\maketitle

\section{Introduction}\label{intro}
Quantum trajectory theory has been widely discussed and studied for years. Many essential aspects of its physics, such as the stochastic evolution equations \cite{Gisin_1984,Gisin_1993,Diosi1988,Diosi_1988} arising from the interaction with the continuously measured environment leading to diffusive or jump-like behavior for the system, are well-illustrated in \cite{Todd:2002,doi:10.1080/00107510601101934}. Quantum trajectories are closely related to continuous measurements, which can be thought of as a limit of repeated weak measurements.  Continuous measurements have been studied and applied to various fields, including quantum optics \cite{Wiseman_1996}, superconducting qubits \cite{Roch2014,Devoret1169,Tan2015}, and quantum state tomography \cite{Shojaee2018,Wu2018,Areeya2018}. A feature worth pointing out in continuous measurements is the use of an adaptive procedure that updates the current status for the next step of measurement. This feedback technique has both experimental advantage for practical designs \cite{Wiseman:1995aa,Wiseman_1998,Sayrin2011,Campagne-Ibarcq2013,Vijay2012} and theoretical importance for the ability to realize more general classes of quantum measurements \cite{Oreshkov:2005aa,Varbanov:2007aa,Florjanczyk:2014aa,Florjanczyk:2015aa}. In quantum optics, for example, the phase measurement of a single mode can be improved by feedback loops during the measurement process \cite{Wiseman:1995aa,Wiseman_1998}, and it was also used in the preparation and stabilization of photon number states \cite{Sayrin2011}. Such an adaptive measuring scheme is also essential in the active field of superconducting qubits \cite{Vijay2012,ClarkeWilhelm2008,Campagne-Ibarcq2013}. In the theory of continuous measurements, the work in \cite{Oreshkov:2005aa,Varbanov:2007aa} on decomposing quantum measurements into a series of weak measurements requires such a continuously updating process: the measurement operators at each time step are updated by the outcomes of the previous weak measurements. It is also well-known that any generalized measurement for a system is equivalent to letting the system interact with an ancilla followed by a projective measurement on the ancilla. In \cite{Florjanczyk:2014aa,Florjanczyk:2015aa}, it is shown that what classes of generalized measurements can be achieved by tuning the probe state or the interaction Hamiltonian during the continuous measurement process.

Since a generalized measurement is represented by a set of operators $M_k$'s such that $\sum_k M_k^{\dagger}M_k=I$, each $M_k$ is naturally decomposed into a unitary multiplying by a positive semi-definite operator by the polar decomposition, i.e., $M_k=U_k \sqrt{E_k}$. The unitary operators $U_k$ do not affect the probabilities of getting the outcomes, but give an evolution conditional on the measurement outcome. The positive semi-definite operators $E_k$ extract the information of the positive-operator-valued measurement (POVM) associated with this generalized measurement. In the light of this observation, we look at the case of ``destructive" measurements where the system is projected onto a certain state after the measurement, while the probabilities of the outcomes reflect a certain POVM. A simple example of this type of measurements is the two-outcome qubit measurement consisting of $M_1=|0\rangle\langle+|$ and $M_2=|0\rangle\langle -|$. It projects the system onto the $|0\rangle$ state for both outcomes while performing a POVM in the x-basis. A physical example of a destructive measurement is photodetection. In photodetection, the observer waits for photons to be absorbed by the detector. After a long monitoring time, all photons are consumed and the source system is projected onto the vacuum state. If we continuously monitor this detection process, the outcomes give a stochastic evolution for the process of photodetection. 

Motivated by this example, we developed a model of destructive weak measurements for qubits in \cite{Chen:1}. It consists of a weak unitary swap between the system and the ancilla qubit, in the initial state $|0\rangle $, followed by a projective measurement on the ancilla qubit. The weak swap between the system and the $|0\rangle$ state is the same for each step and the cumulative effect of the swap leads to a projection onto $|0\rangle$ for the system at long times. This is analogous to photodetection, in which photons must be absorbed to be detected: the source system is projected onto the vacuum state---the analog of $|0\rangle$---after all photons are consumed. Note that the destructiveness in this toy model does not come from the projective measurement on the ancilla, but the constant weak swap interaction between the system and the ancilla qubits and the fact that the ancillas begin in the state $|0\rangle$ at each step. Although the model is destructive, it can achieve any POVM with commuting POVM elements, including all projective measurements on a qubit. However, general POVMs can not be done by this method. In this paper, we develop another 
method, the random walk in a simplex, to achieve an arbitrary POVM with any number of outcomes. 

The paper is structured as follows. In Sec.~\ref{0}, we briefly introduce the model of destructive weak measurements in \cite{Chen:1}. In Sec.~\ref{1}, we present some general properties for a POVM. They are essential to prove the POVM achievability by the model of destructive weak measurement. In Sec.~\ref{2}, we describe the method of random walks in a simplex and show how it fits our model of destructive weak measurements. In Sec.~\ref{3}, we show that the probabilities of getting the outcomes approach the probabilities from the POVM, as expected.

\section{The model of destructive weak measurements}
\label{0}
Recall the model of destructive weak measurements in \cite{Chen:1}: we couple a qubit system $|\psi\rangle\langle \psi|$ to an ancilla in state $|0\rangle$ by a weak swap defined as
\begin{equation}
U=I \cos \phi -i S \sin \phi, 
\end{equation}
where $S$ is the swap operator that switches the system and the ancilla qubit, i.e., $ S|\psi \rangle \otimes |0\rangle=|0\rangle\otimes |\psi\rangle$.
The parameter $\phi$ is assumed to be a small constant. The joint system becomes 
\begin{equation}
U|\psi\rangle\otimes|0\rangle=|\psi\rangle \otimes|0\rangle\cos\phi-i |0\rangle \otimes|\psi\rangle\sin\phi
\end{equation}
after the unitary operation. We perform a POVM on the ancilla with elements of the form
\begin{equation}
\hat{P}_k=s_k |e_k\rangle\langle e_k|,
\end{equation}
where
\begin{equation}
 \langle e_k|e_k\rangle=1,  \ \ 0<s_k\leq 1, \ \ \sum^n_{k=1}\hat{P}_k=I.
\end{equation}
The outcome $\hat{P}_k$ from the ancilla gives a measurement operator
\begin{equation}
\label{meas.op.}
M_k=\sqrt{s_k}\left(\langle e_{k} | 0\rangle I\cos \phi   - i |0\rangle\langle e_{k}|\sin\phi  \right)
\end{equation}
on the system, and $\sum_{k}M^{\dagger}_{k}M_{k}=I$.

\section{General Properties of POVMs}
\label{1}
\begin{definition}[$Linearly$ $independent$ $POVM$]
We say that a POVM, $\{E_i\}^n_{i=1}$, is linearly independent if $\sum^n_{i=1}E_i=I$, where $E_i$'s are positive-semidefinite operators, and $\sum^n_{i=1}c_iE_i=0$ has no nontrivial solutions for the $c_i$'s. We denote such a POVM as a linearly independent POVM (LIPOVM).
\end{definition}

A POVM, in general, can have any number of outcomes. However, it can be decomposed into a linear combination of linearly independent POVMs in the following sense. 
\begin{proposition}
Performing a POVM is equivalent to randomly choosing, with certain probabilities, to perform one LIPOVM from a collection of LIPOVMs.
\end{proposition}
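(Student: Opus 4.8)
The plan is to show that an arbitrary POVM $\{E_i\}_{i=1}^{n}$ can be written as a convex combination $E_i = \sum_a p_a F_i^{(a)}$, where each family $\{F_i^{(a)}\}_i$ is a LIPOVM (some of its elements allowed to vanish, in which case that outcome simply does not occur for that choice) and $\{p_a\}$ is a probability distribution. Operationally, choosing the label $a$ with probability $p_a$ and then performing the LIPOVM $\{F_i^{(a)}\}$ reproduces the outcome statistics $\Tr(\rho E_i)$ of the original POVM for every state $\rho$, which is exactly the claimed equivalence. I would establish the existence of such a decomposition constructively, by induction on the number of nonzero POVM elements; equivalently, this amounts to showing that the LIPOVMs are the extreme points of the convex set of POVMs.

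For the base case, if $\{E_i\}$ is already linearly independent it is itself a LIPOVM and the decomposition is trivial (a single term with probability one). For the inductive step, suppose $\{E_i\}$ is linearly dependent, so among its nonzero elements there is a nontrivial relation $\sum_i c_i E_i = 0$. Taking the trace and using $\Tr(E_i) > 0$ for every nonzero $E_i$ forces the coefficients to include both a strictly positive value $M = \max_i c_i$ and a strictly negative value $-m = \min_i c_i$. I then define the two auxiliary families
\begin{equation}
F_i^{(+)} = \Bigl(1 + \tfrac{c_i}{m}\Bigr) E_i, \qquad F_i^{(-)} = \Bigl(1 - \tfrac{c_i}{M}\Bigr) E_i .
\end{equation}
Each prefactor is nonnegative because $-m \le c_i \le M$, so both families consist of positive-semidefinite operators; and because $\sum_i c_i E_i = 0$ each family still sums to $I$, so both are genuine POVMs. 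By construction the element attaining $c_i = -m$ vanishes in $\{F_i^{(+)}\}$ and the element attaining $c_i = M$ vanishes in $\{F_i^{(-)}\}$, so each auxiliary POVM has strictly fewer nonzero elements than $\{E_i\}$.

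The next step is to recombine these two reduced POVMs into the original one. Setting $\mu = m/(M+m)$, a direct check shows $\mu F_i^{(+)} + (1-\mu) F_i^{(-)} = E_i$ for every $i$, since the $c_i$-dependent terms cancel for this weight. Thus $\{E_i\}$ is a convex combination of two POVMs each having fewer nonzero elements, and applying the inductive hypothesis to each expresses $\{E_i\}$ as a probability-weighted mixture of LIPOVMs, with the $p_a$ obtained by multiplying the weights $\mu$, $1-\mu$ encountered along each branch. The recursion terminates because the count of nonzero elements strictly decreases at every split and a one-element POVM (necessarily $\{I\}$) is automatically linearly independent.

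I expect the main obstacle to be careful bookkeeping rather than anything conceptual: one must verify that the scaling factors are chosen so that positivity of all elements is preserved while at least one element is driven exactly to zero in each branch, and that the single weight $\mu$ cancels the dependence term simultaneously for all $i$. The existence of both a positive and a negative coefficient---guaranteed by the trace argument---is precisely what makes both branches well-defined and ensures genuine progress in the induction.
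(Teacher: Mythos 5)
Your proposal is correct and is essentially the paper's own proof: your $F_i^{(-)} = (1 - c_i/M)E_i$ and $F_i^{(+)} = (1 + c_i/m)E_i$ are exactly the paper's $E^A_j = \frac{c_1-c_j}{c_1}E_j$ and $E^B_j = \frac{c_n-c_j}{c_n}E_j$ (with $c_1 = M$, $c_n = -m$), and your weights $\mu = m/(M+m)$, $1-\mu = M/(M+m)$ coincide with the paper's $P_B$, $P_A$. The only cosmetic differences are your trace argument for the existence of coefficients of both signs (the paper invokes positivity directly) and your explicit induction bookkeeping on the number of nonzero elements, which the paper phrases as ``repeat the process until all POVMs are linearly independent.''
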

We provide the following decomposing process to illustrate Proposition 1.

Consider a POVM, $\{E_i\}^n_{i=1}$, with linearly dependent POVM elements. The $E_i$'s are positive-semidefinite operators with $\sum^n_{i=1}E_i=I$, and $\sum^n_{i=1}c_iE_i=0$ has a nontrivial solution for the $c_i$'s. We use a labeling convention such that $c_1\geq c_2\geq \cdots \geq c_n$, where $c_1,\cdots, c_m\geq 0$ and $c_{m+1},\cdots, c_n < 0$. Note that $c_1>0$ and $c_n<0$ are always true, since otherwise they contradict the positivity of the $E_i$'s. We can construct two POVMs,
\begin{align}
E^A_j\equiv\frac{c_1-c_j}{c_1}E_j,&\ \ j=2,\cdots ,n \\
E^B_j\equiv\frac{c_n-c_j}{c_n}E_j,&\ \ j=1,\cdots ,n-1.
\end{align}
One can easily check that $\sum^n_{j=2}E^A_j=I$ and $\sum^{n-1}_{j=1}E^B_j=I$. We then perform $\{E^A_j\}^{n}_{j=2}$ with probability $P_A=c_1/(c_1-c_n)$ and perform $\{E^B_j\}_{j=1}^{n-1}$ with probability $P_B=-c_n/(c_1-c_n)$. The probability of obtaining each outcome can be checked to agree with the original value $\Tr E_i \rho$ for a given state $\rho$. For outcome $n$, 
\begin{equation}
P_A\Tr \left[E^A_n \rho\right]= \frac{c_1}{c_1-c_n}\Tr\left[ \frac{c_1-c_n}{c_1} E_n\rho\right]= \Tr \left[E_n\rho\right].
\end{equation}
For outcome 1,
\begin{equation}
P_B\Tr\left[ E^B_1 \rho\right]= \frac{-c_n}{c_1-c_n}\Tr \left[\frac{c_n-c_1}{c_n} E_1\rho\right]= \Tr \left[E_1\rho\right]. 
\end{equation}
For the outcomes $i$, $2\leq i\leq n-1$,
\begin{align}
&P_A\Tr \left[E^A_i \rho\right] +P_B\Tr\left[ E^B_i \rho\right] =   \nonumber \\
&\frac{c_1}{c_1-c_n} \Tr\left[ \frac{c_1-c_i}{c_1}E_i \rho\right] +\frac{-c_n}{c_1-c_n} \Tr \left[\frac{c_n-c_i}{c_n} E_i\rho\right]  \nonumber\\
&  =\Tr \left[E_i \rho\right].
\end{align}
This process reduces an $n$-outcome POVM into two $(n-1)$-outcome POVMs with pre-processing. If either of $\{E^A_j\}^n_{j=2}$ and $\{E^B_j\}^{n-1}_{j=1}$ is still linearly dependent, one can repeat the same process to reduce the number of outcomes until all POVMs are linearly independent. Hence, an $n$-outcome linearly dependent POVM is equivalent to randomly choosing from a set of LIPOVMs with certain probabilities. Since a qubit POVM has elements that are 2 by 2 matrices, any qubit POVM with more than 4 outcomes can always be decomposed into 4-or-fewer-outcome LIPOVMs by pre-processing.

\begin{definition}[$Projective$ $POVM$]
We call a POVM with each element proportional to a projector a projective POVM (PPOVM), i.e., all $\{E_i\}_{i=1}^n$ have the form $E_i=a_i |i\rangle\langle i|$, where the $a_i$'s are positive.
\end{definition}

\begin{proposition}
Performing any qubit POVM is equivalent to performing a projective POVM followed by outputting an outcome with a probability that depends on the result of that projective POVM.
\end{proposition}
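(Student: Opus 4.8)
The plan is to recast the claimed ``equivalence'' as an operator identity and then read it off from the spectral theorem. Writing the target POVM as $\{E_i\}_{i=1}^n$, a projective POVM $\{F_j\}$ together with post-processing probabilities $p(i\,|\,j)$ (with $p(i|j)\ge 0$ and $\sum_i p(i|j)=1$) reproduces the statistics of $\{E_i\}$ for \emph{every} state $\rho$ precisely when $\Tr[E_i\rho]=\sum_j p(i|j)\,\Tr[F_j\rho]$ holds for all $\rho$, i.e. when $E_i=\sum_j p(i|j)F_j$ as operators. So the whole proposition reduces to exhibiting a PPOVM and a stochastic relabeling that recover each $E_i$.

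First I would apply the spectral theorem to each element. Since every $E_i$ is a positive-semidefinite $2\times 2$ matrix, it has an orthonormal eigenbasis and can be written as $E_i=\lambda_{i,1}|v_{i,1}\rangle\langle v_{i,1}|+\lambda_{i,2}|v_{i,2}\rangle\langle v_{i,2}|$ with $\lambda_{i,l}\ge 0$. Each nonzero term $\lambda_{i,l}|v_{i,l}\rangle\langle v_{i,l}|$ is a positive multiple of a rank-one projector, so I would take the PPOVM to be the collection of all such nonzero terms, indexed by the pairs $(i,l)$, namely $F_{(i,l)}\equiv\lambda_{i,l}|v_{i,l}\rangle\langle v_{i,l}|$. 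Summing over everything gives $\sum_{(i,l)}F_{(i,l)}=\sum_i E_i=I$, so $\{F_{(i,l)}\}$ is a genuine PPOVM in the sense of Definition 2. The post-processing is then the deterministic rule $p(i'|(i,l))=\delta_{i',i}$: on outcome $(i,l)$ one simply outputs the parent index $i$. This yields final-outcome probability $\sum_l\Tr[F_{(i,l)}\rho]=\Tr[E_i\rho]$, matching the original POVM, which closes the argument.

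I expect the only real subtlety to be bookkeeping rather than a genuine obstacle: one must discard the zero eigenvalues so that every PPOVM element is strictly positive (as Definition 2 requires), and one must note that across different $i$ the vectors $|v_{i,l}\rangle$ need not be mutually orthogonal. Definition 2 does not demand orthogonality, so the union of the per-element spectral projectors is still an admissible PPOVM even though it is not a single von Neumann measurement. The deterministic relabeling above is the degenerate (probability $0$ or $1$) case of the ``probability that depends on the result,'' so it already satisfies the statement; if instead one wants a PPOVM with a controlled number of outcomes or a genuinely stochastic output map, I would first invoke Proposition 1 to reduce to a LIPOVM (for a qubit, at most four elements) and then apply the spectral construction, or work in the Bloch-vector picture where each rank-two $E_i$ lies strictly inside the ball and is a convex combination of boundary (rank-one, hence projective) elements. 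The one thing to verify carefully is simply that the constructed $p(i|j)$ is a legitimate conditional distribution and that the operator identity $E_i=\sum_j p(i|j)F_j$ holds term by term.
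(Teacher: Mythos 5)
Your construction is correct as a proof of the literal statement, but it takes a genuinely different route from the paper, and the difference matters downstream. You split each $E_i$ into its nonzero spectral pieces $F_{(i,l)}=\lambda_{i,l}|v_{i,l}\rangle\langle v_{i,l}|$, which indeed sum to $I$ and form a PPOVM, and you recover the statistics by the deterministic relabeling $p(i'|(i,l))=\delta_{i',i}$; that is simpler and more elementary than the paper's argument. The paper instead writes $E_i=(a_i-b_i)|a_i\rangle\langle a_i|+b_i I$ and re-expands the identity itself, producing a PPOVM $\{\tilde{P}_k\}_{k=1}^{n}$ with the \emph{same} number of outcomes $n$ and a genuinely stochastic post-processing $p(i|k)=\delta_{ik}\left(1-\sum_j b_j\right)+b_i$. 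The payoff of that less obvious decomposition is the Remark that follows it: the relation $E_i=\sum_k p(i|k)\tilde{P}_k$ is invertible, so linear independence of $\{E_i\}$ transfers to $\{\tilde{P}_k\}$. This is exactly what the rest of the paper requires, since the random-walk construction of Secs.~IV--V is carried out only for linearly independent projective POVMs (LIPPOVMs), which for a qubit caps the number of outcomes at four. Your PPOVM has up to $2n$ elements, and as soon as it has more than four it is necessarily linearly dependent in the four-dimensional real space of qubit Hermitian operators, so the Remark fails and the pipeline (Prop.~1 $\to$ Prop.~2 $\to$ random walk) stalls. Your proposed ordering --- Proposition 1 first, then the spectral splitting --- does not repair this, because the splitting reintroduces linear dependence afterwards; what would work is the reverse, namely applying Proposition 1 \emph{after} your construction (Proposition 1 only rescales elements, hence preserves projectivity), at the cost of additional rounds of classical pre-processing. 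In short: your argument proves Proposition 2 in isolation more cheaply; the paper's decomposition is engineered so that Proposition 2 composes with the Remark and the rest of the scheme.
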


\begin{proof}
Given a qubit POVM $\{E_i\}^n_{i=1}$, each $E_i$ can be written as 
\begin{equation}
E_i=a_i|a_i\rangle\langle a_i| + b_i|b_i\rangle\langle b_i|, 
\end{equation}
where $a_i,b_i$ and $|a_i\rangle,|b_i\rangle$ are the eigenvalues and eigenvectors of $E_i$. Let $a_i\geq b_i$ for all $i$. We can rewrite each $E_i$ as
\begin{equation}
E_i=(a_i-b_i)|a_i\rangle\langle a_i|+ b_i I,
\end{equation}
since $|a_i\rangle\langle a_i|+|b_i\rangle\langle b_i|=I$ for each $i$. Because $\sum^n_{i=1}E_i=I$, we have
\begin{equation}
I=\displaystyle\sum^n_{i=1}\frac{a_i-b_i}{\left(1-\sum^n_{j=1}b_j\right)}|a_i\rangle\langle a_i|.
\end{equation}
Define a PPOVM $\{\tilde{P}_i\}^n_{i=1}$ as
\begin{equation}
\tilde{P}_i\equiv \frac{a_i-b_i}{\left(1-\sum^n_{j=1}b_j\right)}|a_i\rangle\langle a_i|.
\end{equation}
We can rewrite $E_i$ in terms of the $\tilde{P}_i$'s; i.e.,
\begin{align}
\label{POVMtrans}
E_i &= (a_i-b_i)|a_i\rangle\langle a_i| + b_i \displaystyle\sum^n_{k=1}\frac{a_k-b_k}{\left(1-\sum^n_{j=1}b_j\right)}|a_k\rangle\langle a_k| \nonumber \\
&= \displaystyle\sum^n_{k=1}\left[\delta_{ik}(a_k-b_k)+b_i \frac{a_k-b_k}{\left(1-\sum^n_{j=1}b_j\right)}\right]|a_k\rangle\langle a_k| \nonumber \\
&= \displaystyle\sum^n_{k=1}\left[\delta_{ik}\left(1-\sum^n_{j=1}b_j\right)+b_i \right] \tilde{P}_k \equiv  \displaystyle\sum^n_{k=1} p(i|k)  \tilde{P}_k,
\end{align}
where $\delta_{ik}$ is the Kronecker delta and 
\begin{equation}
\label{conditionalprob}
p(i|k)\equiv \delta_{ik}\left(1-\sum^n_{j=1}b_j\right)+b_i .
\end{equation} 
One can quickly check that $\sum^n_{i=1}p(i|k)=1$ for any outcome $k$. Hence, the function $p(i|k)$ can be interpreted as a conditional probability, in the sense that we first perform $\{\tilde{P}_i\}^n_{i=1}$ and then output the final outcome $i$ with probability $p(i|k)$ when we got result $\tilde{P}_k$ from $\{\tilde{P}_i\}^n_{i=1}$. Therefore, performing a qubit POVM $\{E_i\}^{n}_{i=1}$ is equivalent to performing a projective POVM $\{\tilde{P}_i\}^n_{i=1}$ followed by post-processing as defined in Eq. (\ref{conditionalprob}).
\end{proof}
\begin{remark}
If the qubit POVM $\{E_i\}^{n}_{i=1}$ is linearly independent, so is $\{\tilde{P}_i\}^n_{i=1}$. 
\end{remark}

This is because the transformation in Eq. (\ref{POVMtrans}) has an inverse. That is 
\begin{equation}
\tilde{P}_i=\displaystyle\sum^n_{j=1}p^{-1}_{ij}E_j,
\end{equation}
where $p^{-1}_{ij}=(\delta_{ij}-b_i)/(1-\sum^{n}_{k=1}b_k)$. If the set of vectors $\{E_i\}^{n}_{i=1}$ is linearly independent, then $\{\tilde{P}_i\}^n_{i=1}$ will be linearly independent as well.

Combining the above properties, we have the following result: any qubit POVM is equivalent to pre- and post-processing steps with a set of linearly independent and projective POVMs. That is, for a given qubit POVM, one uses a random number generator to choose, with probabilities given in Proposition 1, a linearly independent POVM, and uses Proposition 2 to find and perform the projective POVM corresponding to that linearly independent POVM; finally, one outputs the final outcome with probability $p(\cdot|\cdot)$, defined in Eq. (\ref{conditionalprob}). The probability of outputting the final outcome $i$ will be equal to $\Tr E_i\rho $, which is the probability of getting outcome $i$ from the original POVM. 

Note that the only actual measurement made in this process is the linearly independent and projective POVM (LIPPOVM). Hence, for a qubit system, if one can perform any LIPPOVM then all POVMs can be accomplished. We claim that the model of lossy weak measurements \cite{Chen:1} can achieve any qubit LIPPOVM and hence any qubit POVM. The following sections will verify the assertion.

\section{The achievability of POVMs by destructive weak measurements}
\label{2}

\subsection{POVM random walk in a simplex}
In earlier work \cite{Varbanov:2007aa}, it was shown that a generalized measurement can be decomposed into a series of stochastic processes such that the trajectory corresponds to a random walk in a simplex. Here we use a similar approach to parametrize the measurement operator during the sequence of weak measurements. The intuition is to map the current POVM element at each step of a sequence of weak measurements into a convex combination of the original POVM elements. We will show that, for a linearly independent POVM, the conditions to perform a measurement where each outcome corresponds to a vector in the simplex can be satisfied. 

To decompose a linearly independent POVM into stochastic processes of weak measurements, we use a family of measurement operators $M(\vec{x})$ of the form
\begin{equation}
\label{operatorform}
M(\vec{x})=f(\vec{x})U(\vec{x})\sqrt{\sum^{n}_{i=1}x_i E_i}, 
\end{equation}
where $U(\vec{x})$ is a unitary operator, $f(\vec{x})$ is a normalization factor, and 
\begin{equation}
\vec{x}\in \triangle^{n-1}=\left\{(x_1,\cdots ,x_n)|\sum^n_{i=1}x_i=1, \ x_i\geq 0 \ \forall i\right\}.
\end{equation}
The form of Eq.~(\ref{operatorform}) is fairly general because any matrix allows for a polar decomposition, and Eq.~(\ref{operatorform}) only requires the positive-semidefinite matrix to be the square root of a linear combination of the POVM elements. The vector $\vec{x}$ characterizes the position of the random walk in the $(n-1)$-simplex, $\triangle^{n-1}$. Note that for a given position $\vec{x}$ in the walk, 
\begin{equation}
\label{POVMelem}
M^{\dagger}(\vec{x})M(\vec{x})\propto \sum^n_{i=1}x_i E_i.
\end{equation}
This form characterizes how the POVM element of the weak measurement evolves as a convex combination of the original POVM elements.
It will be shown in Sec.~\ref{3} that $\vec{x}$ will approach one of the vertices of $\triangle^{n-1}$ after many measurements: $\vec{x}\to(\cdots,0,1,0,\cdots)$. The vertex corresponds to the outcome $E_j$ of this POVM $\{E_i\}^n_{i=1}$. An example of performing a 3-outcome POVM is shown in Fig.~\ref{walkplot}.

\begin{figure}
  \includegraphics[width=8cm]{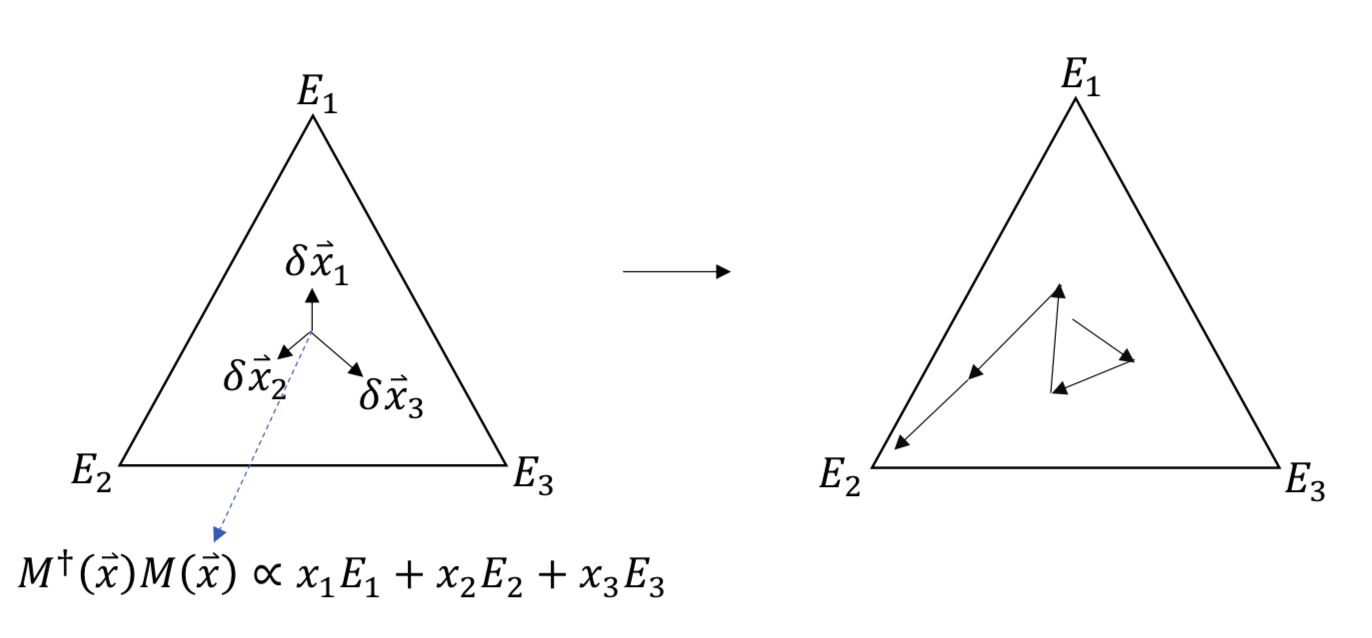}
  \caption{The vector $\vec{x}$, starting from the center, characterizes the position of the walk in $\triangle^{2}$. A 3-outcome measurement is performed at every $\vec{x}$. Each outcome corresponds to a small displacement that takes $\vec{x}$ to a new position $\vec{x}+\delta\vec{x}_k$. At long times, the walk approaches one of the vertices.}
  \label{walkplot}
\end{figure}

Before making any measurement, $\vec{x}$ is at the middle of $\triangle^{n-1}$, so the random walk starts from 
\begin{equation}
\vec{x}=\vec{x}_0=\frac{1}{n} \left(
\begin{array}{c}
1\\
\vdots\\
1
\end{array}
\right),
\end{equation}
and $M(\vec{x}_0)=U(\vec{x}_0)=I$. After each weak measurement is made, the new position $\vec{x}$ remains in $\triangle^{n-1}$ if the weak measurement, with measurement operators $\{M_{k}(\vec{x})\}^n_{k=1}$ satisfying
\begin{equation}
\sum^n_{k=1}M^{\dagger}_k (\vec{x}) M_k (\vec{x})=I,
\end{equation}
satisfies 
\begin{equation}
\label{walkcond1}
M^{\dagger}(\vec{x})M^{\dagger}_{k}(\vec{x})M_k (\vec{x})M(\vec{x})\propto M^{\dagger}(\vec{x}_k)M(\vec{x}_k)
\end{equation}
for each outcome $k$ such that
\begin{equation}
\vec{x}_k\in \triangle^{n-1}.
\end{equation}
Note that $\vec{x}_k$ is a new position in the simplex $\triangle^{n-1}$ after outcome $k$, and $\delta\vec{x}_k\equiv \vec{x}_k-\vec{x}$ is expected to be a small quantity given that we are performing weak measurements. 
Using Eq. (\ref{operatorform}) for $M(\vec{x})$, we can rewrite Eq. (\ref{walkcond1}) as
\begin{align}
\label{cond1}
&\sqrt{\sum^n_{i=1}x_i E_i}U^{\dagger}(\vec{x})M^{\dagger}_{k}(\vec{x})M_k (\vec{x})U(\vec{x})\sqrt{\sum^n_{i=1}x_i E_i} \nonumber \\
&\propto  \sum^n_{i=1} (\vec{x}_k)_i E_i.
\end{align}
Define a new set of POVM elements by
\begin{equation}
\label{newE}
E_i(\vec{x}) \equiv U(\vec{x}) E_i U^{\dagger}(\vec{x}), \ \ \forall \ i \in \{1,\cdots,n\}.
\end{equation}
It can be easily checked that $\{E_i\}^n_{i=1}$ is linearly independent if and only if $\{E_i(\vec{x})\}^n_{i=1}$ is linearly independent. Inserting Eq. (\ref{newE}) back into Eq. (\ref{cond1}), we have 
\begin{align}
\label{conditionx}
& M^{\dagger}_{k}(\vec{x})M_k (\vec{x})\propto \nonumber \\
&\left( \sum^n_{i=1} x_i E_i(\vec{x})\right)^{-\frac{1}{2}} \left( \sum^n_{i=1} (\vec{x}_k)_i E_i(\vec{x}) \right) \left( \sum^n_{i=1} x_i E_i(\vec{x})\right)^{-\frac{1}{2}}.
\end{align}

We can expand all the operators as linear combinations of Pauli operators and the identity; i.e., for each $i$,
\begin{equation}
\label{Epauli}
E_i(\vec{x})=q_i(I+\vec{v}_i \cdot \vec{\sigma}), 
\end{equation}
where $\vec{\sigma}$ is the vector of Pauli matrices and $q_i$, $\vec{v}_i$ are the expansion coefficients. By assumption, all $q_i$'s are strictly positive. For convenience, the indication of $\vec{x}$ dependence for $q_i$ and $\vec{v}_i$ have been dropped. We have
\begin{align}
\label{pauliexpand}
 \sum^n_{i=1} x_i E_i(\vec{x})&=  \sum^n_{i=1} x_i q_i I +  \sum^n_{i=1} x_i q_i\vec{v}_i \cdot \vec{\sigma} \nonumber \\
&\propto I+\frac{1}{\sum^n_{i=1}x_i q_i} \left(\sum^n_{i=1} x_i q_i \vec{v}_i\right)\cdot \vec{\sigma} \nonumber \\
& = I+ \vec{r} \cdot \vec{\sigma},
\end{align}
where 
\begin{equation}
\vec{r}\equiv \frac{1}{\sum^n_{i=1}x_i q_i} \left(\sum^n_{i=1} x_i q_i \vec{v}_i\right).
\end{equation}
Using the property shown in Appendix A, the inverse square root of Eq. (\ref{pauliexpand}) is
\begin{equation}
\label{sqrtE}
\left(\sum^n_{i=1} x_i E_i(\vec{x})\right)^{-\frac{1}{2}} \propto\left(I+\vec{r}\cdot \vec{\sigma}\right)^{-\frac{1}{2}}\propto (I-b \vec{r}\cdot \vec{\sigma}),
\end{equation}
where
\begin{equation}
\label{27}
b=\frac{1-\sqrt{1-|\vec{r}|^2}}{|\vec{r}|^2}.
\end{equation}
Similarly, we can write 
\begin{equation}
\label{pauliK}
\sum^n_{i=1} (\vec{x}_k)_i E_i(\vec{x}) \propto I+\vec{r}_k \cdot \vec{\sigma},
\end{equation}
where 
\begin{equation}
\vec{r}_k\equiv \frac{1}{\sum^n_{i=1}(\vec{x}_k)_i q_i} \left(\sum^n_{i=1} (\vec{x}_k)_i q_i \vec{v}_i\right),
\end{equation}
and define $\delta\vec{r}_k\equiv \vec{r}_k-\vec{r}$.
Inserting Eqs.~(\ref{sqrtE}) and (\ref{pauliK}) to Eq.~(\ref{conditionx}), the condition becomes
\begin{align}
M^{\dagger}_k(\vec{x})M_k(\vec{x})&\propto (I-b \vec{r}\cdot \vec{\sigma})\left(I+\vec{r}_k \cdot \vec{\sigma}\right)(I-b \vec{r}\cdot \vec{\sigma}) \label{30}\\
&\propto I+ \frac{b(\vec{r}\cdot \delta\vec{r}_k)\vec{r}+(\frac{1}{b}-1)\delta\vec{r}_k}{1-\vec{r}\cdot \vec{r}_k}\cdot \vec{\sigma}. \label{31}
\end{align}
The transformation from Eq.~(\ref{30}) to Eq.~(\ref{31}) is by the properties of Pauli matrices and $1-2b+b^2|\vec{r}|^2=0$ from Eq.~(\ref{27}). Hence,
\begin{align}
\label{equation32}
&M^{\dagger}_k(\vec{x})M_k(\vec{x})= \nonumber \\
&c_k\left[ \left(1-\vec{r}\cdot \vec{r}_k\right)I +\left(b(\vec{r}\cdot \delta\vec{r}_k)\vec{r}+(\frac{1}{b}-1)\delta\vec{r}_k\right)\cdot \vec{\sigma} \right],
\end{align}
where the $c_k$'s are $positive$ constants to be determined. The condition to be a measurement,
\begin{equation}
\sum^n_{k=1}M^{\dagger}_k(\vec{x})M_k(\vec{x})=I,
\end{equation}
requires 
\begin{eqnarray}
 \left\{ \begin{array}{l}
 \displaystyle\sum^n_{k=1}c_k(1-|\vec{r}|^2) - \displaystyle\sum^n_{k=1} c_k \delta\vec{r}_k \cdot \vec{r}=1,  \\
b\left(\vec{r}\cdot \displaystyle\sum^n_{k=1} c_k \delta\vec{r}_k\right)\vec{r}  + \left(\frac{1}{b}-1\right)\displaystyle\sum^n_{k=1}c_k \delta\vec{r}_k =0.
 \end{array} \right.
\end{eqnarray}
It is sufficient to solve the following conditions, 
\begin{eqnarray}
\label{condition35}
 \left\{ \begin{array}{l}
  \displaystyle\sum^n_{k=1}c_k=\frac{1}{1-|\vec{r}|^2}, \\
   \displaystyle\sum^n_{k=1}c_k \delta\vec{r}_k=0.
 \end{array} \right. 
\end{eqnarray}
Note that $|\vec{r}|$ cannot be 1 because $|\vec{r}|=1$ happens only if the $E_i$'s are $all$ proportional to the $same$ projector, but this contradicts the assumption that the $E_i$'s are linearly independent.

The question now becomes whether there exists a set of $c_k$'s ($> 0$) and $\vec{x}_k$'s ($\in \triangle^{n-1}$) such that Eq.~(\ref{condition35}) is satisfied. First of all, $\sum^n_{i=1}E_i(\vec{x})=I$ implies that
\begin{equation}
\label{qv}
\vec{q}=(q_1,\cdots,q_n)\in\triangle^{n-1} \ \ \text{and} \ \ \sum^n_{i=1}q_i \vec{v}_i=0. 
\end{equation}
The $\vec{v}_i$'s and $\vec{r}$ are 3-dimensional Bloch vectors and $\vec{x}$ is an n-dimensional vector in $\triangle^{n-1}$, where $n$ can only be 2, 3 or 4 in our case. It is shown in Appendix B that there is a one-to-one map between $\triangle^{n-1}$ and the ``allowed'' space of $\vec{r}$, denoted as $\triangle^{\vec{r}}$,
\begin{equation}
\triangle^{\vec{r}}\equiv \{\vec{r}\mid \vec{r}= \frac{1}{\vec{x}\cdot\vec{q}}\left(\sum^n_{i=1}x_i q_i \vec{v}_i\right), \ \forall\vec{x}\in\triangle^{n-1}\}.
\end{equation}
For $n=2$, $\triangle^1$ is a line and $\triangle^{\vec{r}}$ is a line spanned by a pair of parallel and opposite-directioned vectors $\vec{v}_{1,2}$; for $n=3$, $\triangle^2$ is a regular triangle and $\triangle^{\vec{r}}$ is a 2-D triangle stretched by $\vec{v}_{1,2,3}$; for $n=4$, $\triangle^3$ is a regular tetrahedron and $\triangle^{\vec{r}}$ is a tetrahedron stretched by $\vec{v}_{1,2,3,4}$. An example of the $n=3$ case is illustrated in Fig.~\ref{simplex_pic}. Note that the $\vec{v}_i$'s are $E_i(\vec{x})$'s Bloch vectors, which are rotated from the Bloch vectors of the original $E_i$'s.

It is clear that there are many solutions to Eq.~(\ref{condition35}); one simple solution is 
\begin{eqnarray}
\left\{ \begin{array}{l}
  c_k=\frac{1}{n(1-|\vec{r}|^2)}, \ \ \ \ \forall k\in \{1,\cdots,n\},\\
   \displaystyle\sum^n_{k=1}\delta\vec{r}_k=0,
 \end{array} \right.
\end{eqnarray}
where the $\delta\vec{r}_k=\vec{r}_k-\vec{r}$ are vectors lying in $\triangle^{\vec{r}}$. For example, when $n=3$, the three vectors $\delta\vec{r}_k$'s lie in a triangle and add up to zero vector. The three $\delta\vec{r}_k$'s are given by three $\vec{r}_k$'s, which correspond to three $\vec{x}_k$'s in $\triangle^{n-1}$. In fact, we can solve for the set of $\vec{x}_k$'s if and only if we can solve for the set of $\delta\vec{r}_k$'s. 
\begin{figure}
  \includegraphics[width=8cm]{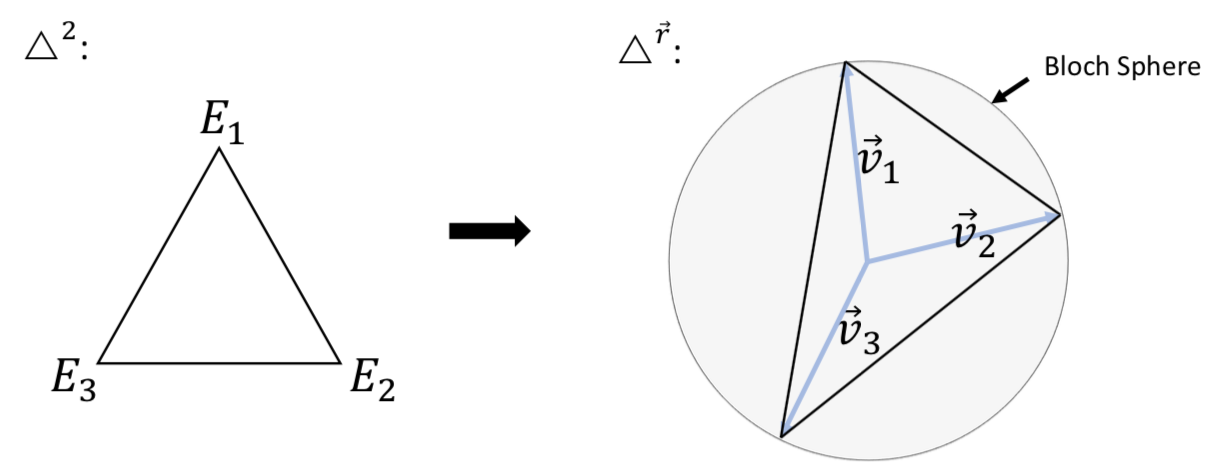}
  \caption{The map from $\triangle^{2}$ to a triangle $\triangle^{\vec{r}}$ in the Bloch sphere.}
  \label{simplex_pic}
\end{figure}

\subsection{Constraints by destructive weak measurements}
So far we have shown that it is possible to perform a weak measurement while staying in a simplex. However, it was assumed that we can choose arbitrary sets of $\vec{x}_k$'s, which correspond to arbitrary weak measurements. We now show that even using a more restricted class of weak measurements---destructive weak measurement---Eq.~(\ref{condition35}) still can be satisfied. Therefore we can perform a series of destructive weak measurements such that it corresponds to a random walk in a simplex. However, the choice of $\delta\vec{r}_k$'s in Eq.~(\ref{condition35}) are more restricted in this case.

In Sec.~\ref{0}, we reintroduced the model of destructive weak measurements in \cite{Chen:1}. It consists of coupling the system with an ancilla qubit in state $|0\rangle$ by a weak swap operation, followed by a projective POVM on the ancilla. The outcome $\hat{P}_k=s_k|e_k\rangle \langle e_k|$ of this projective POVM corresponds to a measurement operator 
\begin{eqnarray}
\label{meas_oper_}
M_k=\sqrt{s_k} \left(\begin{array}{cc}
\langle e_k|0\rangle e^{-i\phi}& -i \langle e_k|1\rangle\sin\phi\\
0& \langle e_k|0\rangle \cos\phi
\end{array}\right)
\end{eqnarray}
on the system. From the other direction, as shown in \cite{Chen:1}, for a set of eigenvectors $\{|b_k\rangle,|\overline{b_k}\rangle\}$ of $M^{\dagger}_kM_k$,
where
 \begin{equation}
|b_k\rangle= \alpha_k |0\rangle + \beta_k e^{i\chi_k}|1\rangle, \ \ |\overline{b_k}\rangle=\beta_k |0\rangle -\alpha_k e^{i\chi_k}|1\rangle, 
\end{equation}
and 
\begin{equation}
\label{eqn44}
\alpha_k,\beta_k\geq 0,\ \  \alpha_k^2+\beta^2_k=1, \  \ \chi_k\in \mathbb{R},
\end{equation}
there exists a corresponding vector $|e_k\rangle$ of $\hat{P}_k$ on the ancilla that will give us those eigenvectors. With this $|e_k\rangle$, the eigenvalues of $M^{\dagger}_kM_k$ take the form 
\begin{equation}
\lambda_k,\overline{\lambda}_k= \frac{1}{2}s_k\left(1+\frac{\sqrt{1-4\alpha_k^2\beta^2_k}\cos^2\phi\pm\sin^2\phi}{\sqrt{1-4\alpha^2_k\beta^2_k\cos^2\phi}}\right).
\end{equation}
On the other hand, one can check that if $\sum^n_{k=1}M^{\dagger}_kM_k=I$ then the corresponding $\hat{P}_k$'s on the ancilla form a projective POVM. 

This result can be translated into the Bloch sphere picture for $M^{\dagger}_kM_k$:
\begin{align}
M^{\dagger}_kM_k&=\lambda_k|b_k\rangle\langle b_k| +\overline{\lambda}_k |\overline{b_k}\rangle\langle\overline{b_k}| \nonumber \\
&= \frac{\left(\lambda_k+\overline{\lambda}_k\right)}{2}\left(I+\frac{\lambda_k-\overline{\lambda}_k}{\lambda_k+\overline{\lambda}_k}\hat{n}_k\cdot\vec{\sigma}\right),
\end{align}
where the unit vector $\hat{n}_k$ is
\begin{eqnarray}
\hat{n}_k=\left( \begin{array}{c}
2 \alpha_k \beta_k \cos\chi_k\\
-2\alpha_k \beta_k \sin\chi_k\\
\alpha^2_k-\beta^2_k
\end{array}
\right).
\end{eqnarray}
The explicit expression for $M^{\dagger}_kM_k$ is
\begin{align}
\label{eqn48}
&M^{\dagger}_kM_k= \frac{s_k}{2}\left(2+\frac{2|(\hat{n}_k)_z|\cos^2 \phi}{\sqrt{\sin^2\phi+(\hat{n}_k)_z^2 \cos^2\phi}}\right) \nonumber \\
&\times \left[I+\frac{\sin^2\phi}{|(\hat{n}_k)_z|\cos^2\phi+\sqrt{\sin^2\phi+(\hat{n}_k)^2_z\cos^2\phi}}\hat{n}_k\cdot \vec{\sigma} \right].
\end{align}
Now the question becomes: can we use the destructive weak measurements to perform a series of measurements which keep the evolution of POVM elements in the simplex? More specifically, for a given position $\vec{x}$ in the simplex, can we match Eq.~(\ref{equation32}) with Eq.~(\ref{eqn48}), for a set of $\{\delta\vec{r}_k\}$ and $\{c_k\}$ that satisfy Eq. (\ref{condition35}) and a set of $\{s_k >0\}$ and $\{\alpha_k,\beta_k\}$ that satisfy Eq.~(\ref{eqn44})? 

Note that in Eq.~(\ref{eqn48}) the vector $\hat{n}_k$ can be any unit vector, and the overall positive constant, 
\begin{equation}
\frac{s_k}{2}\left(2+\frac{2|(\hat{n}_k)_z|\cos^2 \phi}{\sqrt{\sin^2\phi+(\hat{n}_k)_z^2 \cos^2\phi}}\right),
\end{equation}
can be any positive number because $s_k$ can be freely chosen. The only constraint imposed by Eq.~(\ref{eqn48}) is that the Bloch vector's length,
\begin{equation}
\label{eqn50}
\frac{\sin^2\phi}{|(\hat{n}_k)_z|\cos^2\phi+\sqrt{\sin^2\phi+(\hat{n}_k)^2_z\cos^2\phi}},
\end{equation}
is determined by the z-component of $\hat{n}_k$. On the other hand, the requirement that the POVM stay in the simplex means that the measurement operators must satisfy Eq.~(\ref{equation32}), which can be rewritten as  
\begin{align}
\label{eqn51}
&M^{\dagger}_k(\vec{x})M_k(\vec{x})=c_k(1-|\vec{r}|^2-\vec{r}\cdot \delta\vec{r}_k) \nonumber \\
&\times \left[I+ \frac{|\delta\vec{r}_k|\left((\hat{r}\cdot \delta\hat{r}_k)\hat{r}+\sqrt{1-|\vec{r}|^2}(\hat{r}_{\perp}\cdot \delta\hat{r}_k)\hat{r}_{\perp}\right)}{1-|\vec{r}|^2-\vec{r}\cdot \delta\vec{r}_k}\cdot \vec{\sigma} \right],
\end{align}
where $\hat{r},\delta\hat{r}_k$ are the unit vectors of $\vec{r},\delta\vec{r}_k$, and
\begin{equation}
\delta\hat{r}_k=(\hat{r}\cdot \delta\hat{r}_k)\hat{r}+(\hat{r}_{\perp}\cdot \delta\hat{r}_k)\hat{r}_{\perp},\ \ \  \ \ \hat{r}_{\perp}\cdot\hat{r}=0.
\end{equation}
The term $(\hat{r}_{\perp}\cdot \delta\hat{r}_k)\hat{r}_{\perp}$ is just the orthogonal component of $\delta\hat{r}_k$ in the direction $\hat{r}_{\perp}$. (We use hats $\hat{v}$ to denote unit vectors.) The constraint is that the length of the Bloch vector of Eq.~(\ref{eqn51}) must equal Eq.~(\ref{eqn50}),
\begin{align}
\label{eqn53}
& \frac{|\delta\vec{r}_k|\sqrt{1-|\vec{r}|^2\sin^2\theta_k}}{1-|\vec{r}|^2-|\vec{r}||\delta\vec{r}_k|\cos\theta_k} \nonumber \\
&=\frac{\sin^2\phi}{|(\hat{n}_k)_z|\cos^2\phi+\sqrt{\sin^2\phi+(\hat{n}_k)^2_z\cos^2\phi}},
\end{align}
where $\cos\theta_k\equiv \hat{r}\cdot \delta\hat{r}_k$ and 
\begin{equation}
(\hat{n}_k)_z=\frac{(1-\sqrt{1-|\vec{r}|^2})(\hat{r})_z\cos\theta_k+\sqrt{1-|\vec{r}|^2}(\delta\hat{r}_k)_z}{\sqrt{1-|\vec{r}|^2\sin^2\theta_k}}.
\end{equation}
Note that the right hand side of Eq.~(\ref{eqn53}) depends only on the direction of $\delta\vec{r}_k$ (which is $\delta\hat{r}_k$), and it is a number between 0 and $\sin\phi$. We can solve Eq.~(\ref{eqn53}) by first determining the direction of $\delta\vec{r}_k$ and then solving for the length of $\delta\vec{r}_k$. For a given current position $\vec{r}$ in the simplex $\triangle^{\vec{r}}$, we first choose the directions of $\delta\vec{r}_k$'s by pointing from $\vec{r}$ toward the vertices of $\triangle^{\vec{r}}$; i.e., for each outcome $k$, the direction of $\delta\vec{r}_k$ is chosen to be parallel to $\vec{v}_k-\vec{r}$. Therefore,
\begin{equation}
\cos\theta_k=\hat{r}\cdot\delta\hat{r}_k=\hat{r}\cdot\frac{\vec{v}_k-\vec{r}}{\left| \vec{v}_k-\vec{r}\right|}.
\end{equation}
Inserting these directions back into Eq.~(\ref{eqn53}), the left hand side becomes an increasing function of $|\delta\vec{r}_k|$: it increases from 0 to 1 as $|\delta\vec{r}_k|$ goes from 0 to $\left| \vec{v}_k-\vec{r}\right|$. For the directions $\delta\hat{r}_k$ chosen above, we can aways find lengths $|\delta\vec{r}_k|$ to satisfy Eq.~(\ref{eqn53}). For such $\delta\vec{r}_k$'s, the $c_k$'s can always be chosen to satisfy Eq.~(\ref{condition35}) because the $\delta\vec{r}_k$'s are coplanar. Therefore, we have a set of $\delta\vec{r}_k$'s (and hence $\vec{x}_k$'s) and $c_k$'s that satisfy the constraints of the model of destructive weak measurement and the requirements to perform a weak measurement that corresponds to a random walk in a simplex.

\section{The probability to obtain an outcome}
\label{3}

So far we have shown that one can perform, by the model of destructive weak measurements, a series of weak measurements which keep the evolution of the corresponding positive operator in a simplex. In this section, we show that, at long times, the walk in the simplex will approach one of the vertices. Each vertex corresponds to an outcome of this series of weak measurements. The probability of obtaining an outcome $k$ from the series of weak measurements approaches the probability of obtaining $E_k$ if we performed the original POVM, $\{E_i\}^n_{i=1}$.

Note that we only have to consider the case of linearly independent and projective POVMs (LIPPOVM) as discussed at the end of Sec.~\ref{1}. The following subsection will assume the POVM is a LIPPOVM.

\subsection{Approaching a vertex}
From the form of the measurement operator in Eq. (\ref{meas_oper_}), it is evident that after performing $N$ steps, the measurement operators accumulate as
\begin{align}
M_{k_N}\cdots M_{k_1}&\propto \left(\begin{array}{cc}
e^{-iN\phi} & \kappa \\
0 & \cos^N\phi
\end{array}\right)  \\
&=|0\rangle \langle \xi|+\cos^N\phi|1\rangle\langle 1|,
\end{align}
where $\kappa$ is a number depending on the string of outcomes $k_N\cdots k_1$ and 
\begin{eqnarray}
| \xi\rangle \equiv \left(\begin{array}{c}
e^{iN\phi}\\
\kappa^{*}
\end{array}
\right).
\end{eqnarray}
The POVM element of this string of outcomes is 
\begin{align}
E_{k_{N}\cdots k_1}&=M^{\dagger}_{k_1}\cdots M^{\dagger}_{k_N}M_{k_N}\cdots M_{k_1} \nonumber \\
&\propto |\xi\rangle \langle \xi| + \cos^{2N}\phi |1\rangle \langle1|.
\end{align}
Thus for large $N$, the POVM element approaches a projector, $|\xi\rangle \langle \xi|$, at a rate $\cos^{2N}\phi$. On the other hand, because we only consider LIPPOVMs, the simplex, $\triangle^{\vec{r}}$, has vertices on the surface of the Bloch sphere. Those vertices represent the LIPPOVM's elements which are proportional to projectors. The one-to-one correspondence between $\triangle^{\vec{r}}$ and $\triangle^{n-1}$ implies that their vertices are mapped to each other. Recall from Eq.~(\ref{POVMelem}) that when $\vec{x}$ approaches one of the vertices, the walk approaches a projector, which is one of the elements of the LIPPOVM. Therefore, in the large $N$ limit, the random walk on the simplex given by the destructive weak measurements must approach a vertex.

\subsection{The probability for each outcome}
After $N$ steps, we denote the position in the simplex as $\vec{x}^{(i)}_{k_N\cdots k_1}$ if it is close to the vertex $E_i$. The index $k_\nu$ indicates the outcome of the $\nu$th weak measurement, and the $\nu$th measurement depends on all previous outcomes $k_{\nu-1}\cdots k_{1}$. Recall from Eq.~(\ref{POVMelem}), that if the walk approaches the vertex $E_1$, for example, after $N$ steps, the POVM element becomes
\begin{eqnarray}
E(\vec{x}^{(1)}_{k_N\cdots k_1}) = C(\vec{x}^{(1)}_{k_N\cdots k_1})\left[\sum^n_{i=1}(\vec{x}^{(1)}_{k_N\cdots k_1})_i E_i\right], \label{E1prob}
\end{eqnarray}
 where 
 \begin{equation}
(\vec{x}^{(1)}_{k_N\cdots k_1})_1\to1,\ \ \ (\vec{x}^{(1)}_{k_N\cdots k_1})_i\to 0\ \ \text{for}\ i=2,\cdots,n,
\end{equation}
and $C(\vec{x}^{(1)}_{k_N\cdots k_1})$ is a constant depending on the string of outcomes.
The probability of obtaining the outcome ``1" for a state $\rho$ is
\begin{equation}
\label{p1}
p_1\equiv\sum_{k_N\cdots k_1\in \text{``1''}}\Tr \left[ \rho E(\vec{x}^{(1)}_{k_N\cdots k_1})\right],
\end{equation}
where the summation is over all strings of outcomes $k_N\cdots k_1$ such that the walk approaches the vertex $E_1$. To show that $p_1$ indeed approaches the probability given by the usual Born's rule, $\Tr\left[ \rho E_1\right]$, it is equivalent to show that
\begin{equation}
\hat{p}_1\equiv\sum_{k_N\cdots k_1\in \text{``1''}} E(\vec{x}^{(1)}_{k_N\cdots k_1})\to E_1.
\end{equation}

We first claim that the sum of the constants over the strings of outcomes,
\begin{equation}
\sum_{k_N\cdots k_1\in \text{``1''}}C(\vec{x}^{(1)}_{k_N\cdots k_1}), 
\end{equation}
is bounded. Indeed, we can find a unit vector $|\zeta\rangle$ such that for all $ i\in\{1,\cdots,n\}$, $\langle \zeta| E_i |\zeta\rangle\neq 0$. Such a $|\zeta\rangle$ exists because $\{E_i\}^n_{i=1}$ is a qubit LIPPOVM. From Eqs.~(\ref{E1prob}) and (\ref{p1}), we have 
\begin{align}
&\text{min}\{\langle \zeta| E_i |\zeta\rangle\}\times\sum_{k_N\cdots k_1\in \text{``1''}}C(\vec{x}^{(1)}_{k_N\cdots k_1})\nonumber \\
&\leq\langle \zeta| \hat{p}_1 |\zeta\rangle \leq\langle \zeta| \sum^n_{i=1}\hat{p}_i |\zeta\rangle=\langle \zeta| I |\zeta\rangle=1.
\end{align}
Hence, it is bounded by $1/\text{min}\{\langle \zeta| E_i |\zeta\rangle\}$. In the large $N$ limit, for each $i\in\{2,\cdots,n\}$, $(\vec{x}^{(1)}_{k_N\cdots k_1})_i\to0$ for all strings of outcomes $k_N\cdots k_1$. Therefore, we have
\begin{align}
&\sum_{k_N\cdots k_1\in \text{``1''}}C(\vec{x}^{(1)}_{k_N\cdots k_1})(\vec{x}^{(1)}_{k_N\cdots k_1})_i \nonumber\\
&\leq \text{max}\{(\vec{x}^{(1)}_{k_N\cdots k_1})_i\}\sum_{k_N\cdots k_1\in \text{``1''}}C(\vec{x}^{(1)}_{k_N\cdots k_1}) \nonumber\\
&\leq \frac{ \text{max}\{(\vec{x}^{(1)}_{k_N\cdots k_1})_i\}}{\text{min}\{\langle \zeta| E_i |\zeta\rangle\}}\to 0,
\end{align}
where $\text{max}\{(\vec{x}^{(1)}_{k_N\cdots k_1})_i\}$ is the maximum of $(\vec{x}^{(1)}_{k_N\cdots k_1})_i$ among all strings of outcomes. The coefficients, $\sum_{k_N\cdots k_1\in \text{``1''}}C(\vec{x}^{(1)}_{k_N\cdots k_1})(\vec{x}^{(1)}_{k_N\cdots k_1})_i$, approach zero for $i\neq1$. For the same reason, other outcomes won't contribute to ``1'' either. Because $\sum^n_{i=1}\hat{p}_i=I=\sum^n_{i=1}E_i$ and the $\{E_i\}^n_{i=1}$ are linearly independent, we have 
\begin{align}
&\displaystyle\sum_{k_N\cdots k_1\in \text{``1''}}C(\vec{x}^{(1)}_{k_N\cdots k_1})(\vec{x}^{(1)}_{k_N\cdots k_1})_i\to \delta_{1i}, \nonumber \\
&\ \ \ \ \ \ \ \ \ \ \ \ \ \ \ \ \ \ \ \ \ \ \ \ \ \vdots \ \ \ \nonumber \\
&\displaystyle\sum_{k_N\cdots k_1\in \text{``n''}}C(\vec{x}^{(n)}_{k_N\cdots k_1})(\vec{x}^{(n)}_{k_N\cdots k_1})_i\to \delta_{ni}.
\end{align}
This verifies that $p_i$ does approach $\Tr\left[\rho E_i\right]$, and the POVM element finally evolves close to a vertex $E_i$ with probability approaching $\Tr\left[\rho E_i\right]$.

\section{Conclusion}\label{conclusion}

The model of destructive weak measurement in \cite{Chen:1} captures the behavior of a destructive strong measurement that leaves the system in a fixed final state after the measurement regardless of the outcome.  The only freedom in this model is the ability to choose any projective positive-operator-valued measurement on the ancilla in an adaptive way. Surprisingly,  \cite{Chen:1} shows that, in spite of its limitations, this model can achieve any projective measurement for qubits, and the result can be generalized to any positive-operator-valued measurement (POVM) with commuting POVM elements. However, the method used in \cite{Chen:1} does not work for POVMs with non-commuting elements.

In this paper we have shown that {\it any} qubit POVM can be achieved by this model. The improvement from POVMs with commuting elements to all POVMs requires a more complicated generalization of the previous method. The approach combines classical pre- and post-processing with a continuous measurement that corresponds to a random walk in a simplex. The pre-processing randomly chooses a linearly independent POVM with elements proportional to projectors from the original POVM. This linearly independent POVM is decomposed into a sequence of destructive weak measurements by mapping the evolution of the positive operator to a random walk in a simplex. The vertex the walk approaches corresponds to the final outcome of this succession of weak measurements.  Finally, the post-processing chooses which result to output, conditioned on the outcome of the quantum measurement process. The probability of each result agrees with the probability of that result in the original POVM. 

We have shown that a limited model of destructive weak measurements can achieve an arbitrary POVM for qubits. In higher dimensional systems, what measurements can be achieved by such a restricted class of weak measurements is an open question. The scheme of a system interacting with a stream of qubit probes was shown to be restricted to provide measurements with two distinct singular values  \cite{Florjanczyk:2014aa}. Generalizing the ancilla to higher dimensions gives more degrees of freedom, and that may be a possible route to achieve POVMs for higher dimensional systems. This is the subject of ongoing research.

\section*{Acknowledgments} 

YHC and TAB acknowledge useful conversations with Christopher Cantwell, Ivan Deutsch, Shengshi Pang, Christopher Sutherland, and Howard Wiseman. This work was funded in part by the ARO MURI under Grant No. W911NF-11-1-0268, NSF Grant No. CCF-1421078, NSF Grant No. QIS-1719778, and by an IBM Einstein Fellowship at the Institute for Advanced Study.

\appendix
\section{$E^{-1/2}$}
For a positive-semidefinite 2-by-2 matrix $E$, we can express $E$ in terms of Pauli matrices and the identity:
\begin{equation}
E=\frac{a}{2}\left(I+\vec{r}\cdot \vec{\sigma}\right) \ \ \ \ \ |\vec{r}|\leq 1, \ \ a>0.
\end{equation}
Decomposing it into its eigenvectors and eigenvalues, we have 
\begin{align}
E&=\lambda_{+}|\psi_{+}\rangle \langle \psi_{+}| +\lambda_{-}|\psi_{-}\rangle \langle \psi_{-}| \nonumber \\
&= \frac{1}{2}\left[(\lambda_{+} + \lambda_{-})I+(\lambda_{+} -\lambda_{-})\frac{\vec{r}}{|\vec{r}|}\cdot \vec{\sigma}\right],
\end{align}
where
\begin{equation}
|\psi_{\pm}\rangle \langle \psi_{\pm}|=\frac{1}{2}\left(I\pm \frac{\vec{r}}{|\vec{r}|}\cdot \vec{\sigma}\right)
\end{equation}
and 
\begin{equation}
\lambda_{\pm}=\frac{a}{2}(1\pm |\vec{r}|).
\end{equation}
The inverse square root is
\begin{align}
E^{-\frac{1}{2}}&= \frac{1}{\sqrt{\lambda_+}}|\psi_{+}\rangle \langle \psi_{+}| +\frac{1}{\sqrt{\lambda_-}}|\psi_{-}\rangle \langle \psi_{-}| \\
&= \frac{1}{2}\sqrt{\frac{2}{a}}\left(\frac{\sqrt{1+|\vec{r}|}+\sqrt{1-|\vec{r}|}}{\sqrt{1-|\vec{r}|^2}}\right) \\
&\ \ \ \times \left(I-\frac{\sqrt{1+|\vec{r}|}-\sqrt{1-|\vec{r}|}}{|\vec{r}|(\sqrt{1+|\vec{r}|}+\sqrt{1-|\vec{r}|})}\vec{r}\cdot \vec{\sigma}\right) \\
&\propto I-b \vec{r} \cdot \vec{\sigma},
\end{align}
where
\begin{equation}
b=\frac{1-\sqrt{1-|\vec{r}|^2}}{|\vec{r}|^2}.
\end{equation}

\section{Correspondence between $\triangle^{\vec{r}}$ and $\triangle^{n-1}$}
The picture relating the position indicator $\vec{x}\in\triangle^{n-1}$ and its 3-D image $\vec{r}$ in Bloch sphere is presented in this section. The $n$ can only be 2, 3 or 4 in our cases.

Recall from Eqs.~(\ref{Epauli}) and (\ref{pauliexpand}), the definition of $\vec{r}$ is 
\begin{equation}
\vec{r}=\frac{1}{\vec{x}\cdot \vec{q}}\sum^n_{i=1} x_i q_i \vec{v}_i,
\end{equation}
where $\vec{x},\vec{q}\in\triangle^{n-1}$, $q_i>0$ for all $i$, and the $\vec{v}_i$'s are the Bloch vectors corresponding to a set of linearly independent POVM elements. 

We define a bijection $T_{\vec{q}}:\triangle^{n-1}\to \triangle^{n-1}$ 
\begin{eqnarray}
T_{\vec{q}}(\vec{x})\equiv \frac{1}{\vec{x}\cdot \vec{q}}\left( \begin{array}{c}
q_1 x_1 \\
\vdots\\
q_n x_n
\end{array}\right),
\end{eqnarray}
for any $\vec{q}\in\triangle^{n-1}$, where $q_i>0$ for all $i=1,\cdots,n$.
The inverse map is given by 
\begin{eqnarray}
T^{-1}_{\vec{q}}(\vec{x})\equiv \frac{1}{\sum^n_{i=1} x_i q^{-1}_i}\left( \begin{array}{c}
q^{-1}_1 x_1 \\
\vdots\\
q^{-1}_n x_n
\end{array}\right).
\end{eqnarray}
We can define a vector $\vec{x}_{\vec{q}}\equiv T_{\vec{q}}(\vec{x})\in\triangle^{n-1}$ associated with the original $\vec{x}$. It is equivalent to analyze with $\vec{x}$ and with $\vec{x}_{\vec{q}}$ because of the bijection $T_{\vec{q}}$, i.e., we have a solution for $\vec{x}_{\vec{q}}$ if and only if we have a solution for $\vec{x}$. We choose to work with $\vec{x}_{\vec{q}}$ since it is linearly related to $\vec{r}$:
\begin{eqnarray}
\vec{r}=\left( \begin{array}{ccc}
 & & \\
\vec{v}_1& \cdots &\vec{v}_n\\
& & 
\end{array}\right) \frac{1}{\vec{x}\cdot \vec{q}}\left( \begin{array}{c}
q_1 x_1 \\
\vdots\\
q_n x_n
\end{array}\right)=V \vec{x}_{\vec{q}},
\end{eqnarray}
where
\begin{eqnarray}
V\equiv \left( \begin{array}{ccc}
 & & \\
\vec{v}_1& \cdots &\vec{v}_n\\
& & 
\end{array}\right)_{3 \times n}
\end{eqnarray}
is a 3 by $n$ matrix. The space $\vec{r}$ lives in is 
\begin{equation}
\triangle^{\vec{r}}=\left\{\vec{r}\ | \vec{r}=V\vec{x}_{\vec{q}}, \ \ \forall\vec{x}_{\vec{q}}\in\triangle^{n-1} \right\},
\end{equation}
which is a subspace spanned by $\{\vec{v}_1,\cdots,\vec{v}_n\}$. 

We claim that $\text{rank}(V)=n-1$ for $n=2,3,4$. The proof is as follows. Since $\sum^n_{i=1}q_i \vec{v}_i=0$ by Eq.~(\ref{qv}), we have $\vec{q}\in\text{Ker}(V)$. And by the linear independence of the set of POVM elements, 
\begin{equation}
E_i(\vec{x})=q_i(I+\vec{v}_i \cdot \vec{\sigma}),
\end{equation}
we have that for any vector $\vec{c}=(c_1,\cdots ,c_n)$ such that 
\begin{equation}
\sum^n_{i=1}c_i E_i(\vec{x})=0,
\end{equation}
then $\vec{c}=0$. This implies that there is no nontrivial solution $\vec{c}$ satisfying both
\begin{equation}
\label{condc}
\vec{c}\cdot \vec{q}=0\ \ \ \text{and}\ \ \ \sum^n_{i=1}c_i q_i \vec{v}_i=0.
\end{equation}
We can decompose the $n$-dimensional vector space into $\text{Span}\{\vec{q}\}\oplus\text{Span}\{\vec{q}\}^{\perp}$. Suppose a nontrivial vector $\vec{y}\in\text{Span}\{\vec{q}\}^{\perp}$ is also in $\text{Ker}(V)$, i.e.,
\begin{equation}
\vec{y}\cdot \vec{q}= 0\ \ \ \text{and}\ \ \ \sum^n_{i=1}y_i \vec{v}_i=V\vec{y}=0.
\end{equation}
Then there exists a vector 
\begin{eqnarray}
\vec{c}\equiv \left( \begin{array}{c}
\frac{y_1}{q_1} \\
\vdots \\
\frac{y_n}{q_n} 
\end{array}
\right) -\left(\sum^{n}_{i=1}y_i\right) \left( \begin{array}{c}
1\\
\vdots \\
1
\end{array}
\right)
\end{eqnarray}
satisfying both conditions in Eq.~(\ref{condc}), and one can easily check that $\vec{c}\neq 0$. This contradicts the fact that there is no nontrivial solution for Eq.~(\ref{condc}), and hence no vector in $\text{Span}\{\vec{q}\}^{\perp}$ is also in $\text{Ker}(V)$. We have $\text{Span}\{\vec{q}\}=\text{Ker}(V)$ and therefore $\text{dim}(\text{Ker}(V))=1$. By the rank-nullity theorem, we have $\text{rank}(V)=n-1$. 

Finally, to show that the map $V:\triangle^{n-1}\to \triangle^{\vec{r}}$ is one-to-one, we use the following argument. If $\vec{x}_{1,2}\in\triangle^{n-1}$ such that $V\vec{x}_1=V\vec{x}_2$, then $\vec{x}_1-\vec{x}_2= a \vec{q} \in\text{Ker}(V)$ for a number $a$. By the fact that $\vec{x}_{1,2}$ and $\vec{q}$ are in $\triangle^{n-1}$, $a$ must be zero, and  hence $\vec{x}_1=\vec{x}_2$. The map $V:\triangle^{n-1}\to \triangle^{\vec{r}}$ is onto, by the definition of $\triangle^{\vec{r}}$. Hence $\triangle^{n-1}$ and $\triangle^{\vec{r}}$ are isomorphic. We have $\text{dim}(\triangle^{\vec{r}}) = \text{dim}(\triangle^{n-1})=n-1$, and for $n=2,3, 4$, they are a line, a triangle, a tetrahedron.

\bibliography{QubitPOVM_v4c}

\end{document}